\newcommand{\old}[1]{{}}
\newcommand{\colg}[1]{{\color{green!30!black}{{\textit{#1}}}}}
\newcommand{\I}{\mathcal{I}}
\newcommand{\B}{\mathcal{B}}
\newcommand{\Sm}{\mathcal{S}}
\newcommand{\Oo}{\mathcal{O}}
\newcommand{\defn}[1]{\textit{\colg{#1}}}
\newcommand{\flo}[1]{\left\lfloor#1\right\rfloor}
\newcommand{\Set}{\textsc{Set Cover}\xspace}
\newcommand{\Isur}{\textsc{Minimal Interval G-SUR}\xspace}
\title{Geometric Systems of Unbiased Representatives\thanks{
Research of Sujoy Bhore is supported by the Austrian Science Fund (FWF) grant P 31119. Research of Leonardo Mart\'inez-Sandoval supported by the grant ANR-17-CE40-0018 of the French National Research Agency ANR (project CAPPS)}}
\author{Aritra Banik\inst {1}\and
Bhaswar B. Bhattacharya\inst{2}\and 
Sujoy Bhore\inst{3}\and
Leonardo Mart\'inez-Sandoval\inst{4}}
\authorrunning{A. Banik, B. Bhattacharya, S. Bhore, L. Mart\'inez}
\institute{School of Computer Sciences, National Institute of Science Education and Research, HBNI, Bhubaneswar, India
\\\email{\{aritrabanik\}@gmail.com}
\and
Department of Statistics, University of Pennsylvania, Philadelphia, USA
\\\email{\{bhaswar\}@wharton.upenn.edu}
\and
Algorithms and Complexity Group, Technische Universit\"{a}t Wien, Austria 
\\\email{\{sujoy\}@ac.tuwien.ac.at}
\and
Institut de Math\'ematiques de Jussieu-Paris Rive Gauche (UMR 7586), Sorbonne Universit\'e, France
\\\email{\{leomtz\}@im.unam.mx}
}
\begin{document}

\maketitle

\begin{abstract}
Let $P$ be a set of points in $\mathbb{R}^d$, $B$ a bicoloring of $P$ and $\Oo$ a family of geometric objects (that is, intervals, boxes, balls, etc). An object from $\Oo$ is called balanced with respect to $B$ if it contains the same number of points from each color of $B$. For a collection $\B$ of bicolorings of $P$, a geometric system of unbiased representatives (G-SUR) is a subset $\Oo'\subseteq\Oo$ such that for any bicoloring $B$ of $\B$ there is an object in $\Oo'$ that is balanced with respect to $B$.

We study the problem of finding G-SURs. We obtain general bounds on the size of G-SURs consisting of intervals, size-restricted intervals, axis-parallel boxes and Euclidean balls. We show that the G-SUR problem is NP-hard even in the simple case of points on a line and interval ranges. Furthermore, we study a related problem on  determining the size of the largest and smallest balanced intervals for points on the real line with a random distribution and coloring.

Our results are a natural extension to a geometric context of the work initiated by Balachandran et al. on arbitrary systems of unbiased representatives.




\end{abstract}

\section{Introduction}

Let $P$ be a set of size $n$. A \defn{bicoloring} $B$ of $P$ is a color assignment (red or blue) of the points in 
$P$, that is, $B:P \rightarrow \{$Red, Blue$\}$, where $B$ contains at least one red and at least one blue point.
For a bicoloring $B$, a subset of points  $P'\subseteq P$ is called \defn{balanced with respect to $B$} if $P'$ contains the same number of red and blue points, with respect to $B$. Given a set $P$ and a set of bicolorings $\B$, a \defn{system of unbiased representatives (SUR)} consists of a collection $\Sm$ of subsets of $P$ such that for every bicoloring $B \in \B$, there is at least one subset in $\Sm$ that is 
\defn{balanced} with respect to $B$. 

Balachandran et al.~\cite{DBLP:Balachandran2018} studied various problems related to finding SURs, with the motivation that SURs are useful for product testing 
over a large population. For example, suppose the effectiveness of a drug on patients is studied with respect to a large set of binary attributes related to physical characteristics, such as body weight, height, age. It is desirable to choose few families of test subjects that help to represent these attributes in a balanced manner.

Now, consider an instance where in addition we are given specific geographic locations for our test subjects and we are asked to pick them close to each other to save costs in sampling. In this situation, we cannot choose arbitrary families of test subjects: we would be required to impose some geometric constrains on them.

A natural way to model this restriction is to represent the population by a point set $P$ in Euclidean space of some dimension and to sample using ranges from some fixed family of geometric objects, that is, intervals, boxes, balls, etc. This leads to the following definitions.

For a bicoloring $B$ of $P$, we say that a geometric range is \defn{balanced with respect to $B$} if the subset of points of $P$ that it contains is balanced with respect to $B$. Given a set $P$, a set of bicolorings $\B$ and a family of allowed geometric ranges $\Oo$, 
a \defn{geometric system of unbiased representatives (G-SUR)} consists of a subfamily $\Oo'\subseteq\Oo$ such that for every bicoloring $B \in \B$, there is at least one object in $\Oo'$ that is balanced with respect to $B$.

\begin{framed}
\begin{problem} {(\textbf{G-SUR})}
Given a set $P\subset\mathbb{R}^d$ of $n$ points, a set of bicolorings $\B$ of $P$, and a collection of allowed geometric ranges  $\Oo$, find a G-SUR of minimal size.
\end{problem}
\end{framed}

For a specific attribute, it is desirable to understand how big a balanced range (for this attribute) can be.  Assuming attributes are uniformly distributed over the population, leads to the following problem:

\begin{framed}
\begin{problem}{(\textbf{Balanced Random Covering})}
Given a set $P$ of $n$ points and a random bicoloring $B$ of $P$ (chosen uniformly at random from a collection of bicolorings $\mathcal B$ of $P$), what can be said about the behavior of the size of the largest/smallest balanced interval as n goes to infinity?
\end{problem}
\end{framed}
In addition to the practical motivation, Problem~$1$ 
and Problem~$2$
are related to the vast literature
on colorings of geometric objects in which a balanced property is desired.
This includes classical results as the ham-sandwich theorem and its algoritmic version by Lo et al. \cite{lo1994algorithms}. 
Other relevant results on balanced coloring of point sets include the balanced island problem studied by Aichholzer et al. \cite{aichholzer2018computing}, balanced partitions problem for $3$-colored planar sets by Bereg et al. \cite{bereg-3},
and balanced $4$-holes in bichromatic point set \cite{bereg2015balanced}. 

\subsection{Our Results}
As an introduction to the subtleties of the geometric context, we study the G-SUR problem for $n$ points on a line and interval ranges in Section~\ref{line}. 
We show, given a set of $n$ points on a line and a collection of interval ranges, there is G-SUR of size $n-1$. Moreover, this bound is tight, that is, there are a set of bicolorings for which $n-1$ intervals are required to obtain a balanced interval (Theorem \ref{theorem-line}). 
Motivated by statistical significance, we then focus on G-SURs 
where the set of ranges are intervals of size $2k$. 
Here, we show that for any set of bicolorings $\B$, where each bicoloring in $\B$ contains more than $\flo{\frac{n}{2k}+1}(k-1)$ red and $\flo{\frac{n}{2k}+1}(k-1)$ blue points such a G-SUR exists (Theorem \ref{thm:intervalk}).
Next, for $m<n/2$, we give bounds on the size of G-SURs for when each bicoloring of $\B$ has at least $m$ red and $m$ blue points. More precisely, we show that $n-m$ intervals are always sufficient and sometimes necessary (Theorem  \ref{theorem-m-res}). All these results extend to higher dimensions to point sets in $\mathbb{R}^d$ and G-SURs consisting of axis-parallel boxes.
Section~\ref{sec:hardness} provides the hardness results. We show that the problem of finding a minimal size G-SUR is NP-hard even in the simple case of points on the real line and interval ranges (Theorem \ref{hardness}). To do this we provide a reduction from the \Set problem.

In Section \ref{high-d}, we study the problem for points in $\mathbb{R}^d$ and G-SURs consisting of Euclidean balls. Once more, we show $n-1$ balls are sometimes necessary and always sufficient to give a G-SUR (Theorem \ref{theorem-ball}).

Finally, in Section~\ref{random}, we study the Balanced Random Covering problem, where we compute the asymptotic size of the largest/smallest balanced interval for uniformly random bicolorings of points on a line in a discrete model (Theorem \ref{thm:dist}) and a continuous one (Theorem \ref{thm:distcont}).

\section{Points on a Line and Interval G-SURs}\label{line}
Let $P=\{p_1.\ldots,p_n\}$ be a set of points on the real line $\mathbb{R}$.
Throughout this section we assume that $\{p_1.\ldots,p_n\}$ is sorted from left to right on the real line.
Here our goal is to find a minimum size G-SUR  consisting of interval ranges for a given family of bicolorings $\B$ of $P$.



\subsection{Lower and Upper Bounds}
In this section we show that $n-1$ intervals are always sufficient and sometimes necessary.

\begin{theorem}\label{theorem-line}
Let $P=\{p_1,\ldots,p_n\}$ be a set of $n$ points on a line. 
Then, the following hold:

\begin{enumerate}
 \item[(a)] There exists a set of $n-1$ bicolorings $\B$, for which any G-SUR consisting of intervals has size at least $n-1$ and

 \item[(b)] There exists a set $\mathcal{I}$ of $n-1$ intervals such that for any bicoloring $B$ of $P$ there is at least one balanced interval in $\mathcal{I}$ with respect to $B$. 
\end{enumerate}
\end{theorem}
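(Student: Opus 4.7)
The plan splits naturally into the two parts of the theorem.

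For part (b), I will exhibit the explicit family $\mathcal{I}=\{I_i\}_{i=1}^{n-1}$ with $I_i=[p_i,p_{i+1}]$. Each $I_i$ contains exactly the two consecutive points $p_i$ and $p_{i+1}$, so it is balanced with respect to a bicoloring $B$ precisely when $B(p_i)\ne B(p_{i+1})$. Given any bicoloring $B$ of $P$, since $B$ uses both colors at least once by definition, the sequence $B(p_1),\ldots,B(p_n)$ cannot be monochromatic, and hence some adjacent pair must differ in color. The corresponding $I_i$ is then balanced with respect to $B$, so $\mathcal{I}$ is a G-SUR of size $n-1$ for every family of bicolorings.

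For part (a), I will take the family of \emph{prefix bicolorings} $B_1,\ldots,B_{n-1}$, where $B_i$ assigns red to $p_1,\ldots,p_i$ and blue to $p_{i+1},\ldots,p_n$. Since for balance purposes an interval is determined by the indices $a\le b$ of its leftmost and rightmost contained points, I may write each candidate interval as $I=[p_a,p_b]$. The crux is to show that a single interval can be balanced with respect to \emph{at most one} $B_i$. Under $B_i$, the interval $[p_a,p_b]$ contains $\min(i,b)-a+1$ red points when $a\le i$ and zero otherwise. Balance therefore forces $a\le i<b$ together with the equation $i-a+1=b-i$, which simplifies to $a+b=2i+1$. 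Since $a+b$ determines $i$ uniquely, each interval witnesses at most one prefix bicoloring. Consequently, any G-SUR for $\{B_1,\ldots,B_{n-1}\}$ must contain at least $n-1$ distinct intervals.

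The argument is short and the steps are of comparable difficulty, so I do not anticipate a serious obstacle; the only technical point is the uniqueness of the witnessing index $i$, and this is immediate from the linear equation $a+b=2i+1$. The construction for (b) and the prefix bicolorings for (a) are essentially dual to each other, which explains why the bound $n-1$ is exactly tight.
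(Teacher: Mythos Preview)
Your proposal is correct and essentially identical to the paper's proof. For (b) you use the same family of consecutive-pair intervals $[p_i,p_{i+1}]$ and the same monochromatic-chain contradiction; for (a) you use the same prefix bicolorings $B_i$, and your equation $a+b=2i+1$ is precisely the paper's statement that a balanced interval must be symmetric about $\tfrac{2i+1}{2}$, so the uniqueness argument is the same in both cases.
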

%

\begin{figure}[h]
\centering
\includegraphics[scale=0.50]{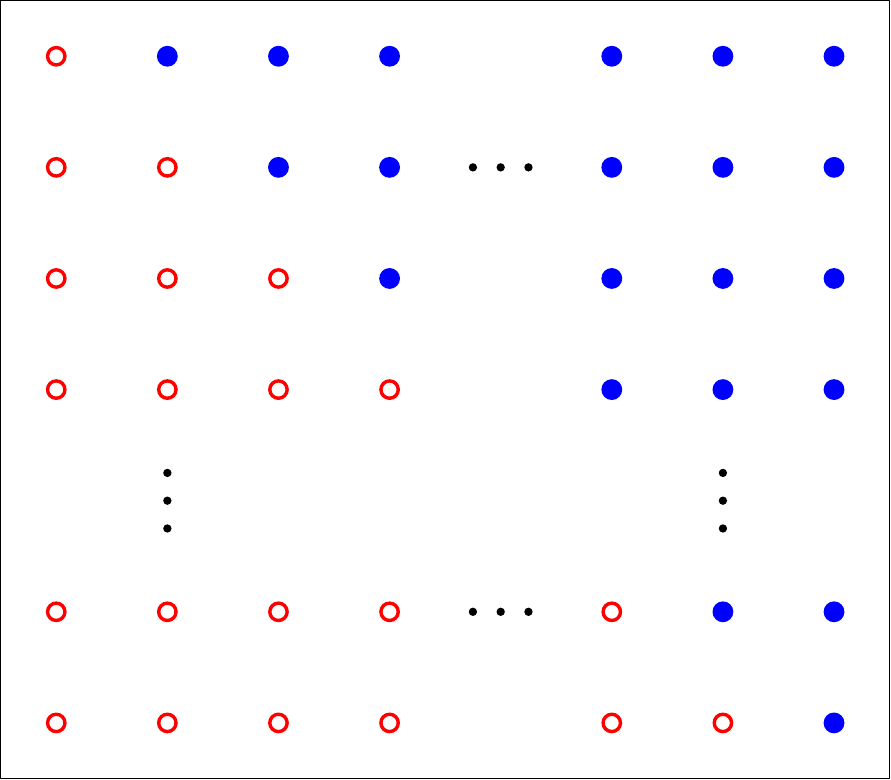}
\caption{An illustration of the case where $n-1$ intervals are necessary.} 
\label{fig:necessary-fig}
\end{figure}


\begin{proof}
We prove the first part of the theorem by constructing an example where $n-1$ intervals are necessary. Without loss of generality, we can assume  $P=\{1,2,\ldots,n\}$. We consider the set of bicolorings $\B=\{B_1,\ldots,B_{n-1}\}$ where the bicoloring $B_i$ colors the first $i$-points red and the remaining $(n-i)$ points blue (see Figure~\ref{fig:necessary-fig}).

A balanced interval with respect to $B_i$ may be shortened until its endpoints are integers. Thus, we may choose a minimal G-SUR that consists only of intervals with integral endpoints. In such a system, an interval that is balanced for $B_i$ must be symmetric around $\frac{2i+1}{2}$. Different bicolorings need intervals symmetric around different points, so a G-SUR for $\B$ requires at least $n-1$ intervals.

For the second part of the theorem consider $$\mathcal{I}=\{[p_1,p_2],[p_2,p_3],\ldots,[p_{n-1},p_n]\}.$$
Let $B$ be any bicoloring of $P$. We claim that there exists an interval in $\mathcal{I}$ which is balanced with respect to $B$. Indeed, if this is not the case then all the intervals in $\mathcal{I}$ are monochromatic, so $$B(p_1)=B(p_2)=\ldots=B(p_n),$$ which is a contradics the fact that $B$ contains at least one red and one blue point. This concludes the proof of (b).
\end{proof}

Theorem \ref{theorem-line} is already evidence of the contrast between the geometric and the abstract setting. While Balachandran et al. proved that $n-1$ arbitrary sets are sometimes necessary, their example consists of all $2^n-2$ possible bicolorings. 
The theorem above shows that only taking $(n-1)$ bicolorings are enough for the necessity, in the geometric context. 
Theorem~\ref{theorem-line} says that this is still the case even if we further restrict the allowed ranges to be intervals.

An analogous proof shows that the $n-1$ bound carries to point sets in $\mathbb{R}^d$ and G-SURs consisting of axis-parallel boxes.

\subsection{G-SURs Consisting of Intervals of Size $2k$}
In this section we fix a positive integer $k$ and consider the case where the G-SUR consists of intervals of length exactly $2k$.
Certainly, with this restriction it is not possible to have a G-SUR for every possible set of bicolorings.
For example, consider any bicoloring $B$ that contains exactly one red point. No interval of size
greater than $2$ is balanced with respect to $B$. In the following lemma we show that if each color is large enough, then there exists a G-SUR consisting of intervals of length $2k$.

\begin{theorem}\label{thm:intervalk}
Given $n\geq 2k$, a set $P=\{p_1,\ldots,p_n\}$ of points on the real line and a set of bicolorings $\B$, 
where each bicoloring in $\B$ contains more than $\flo{\frac{n}{2k}+1}(k-1)$ red and $\flo{\frac{n}{2k}+1}(k-1)$ blue points, there exist a G-SUR for $\B$
consisting of intervals of size $2k$. 
\end{theorem}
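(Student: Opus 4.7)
The plan is to take as the candidate G-SUR the sliding-window family $\mathcal{I} = \{I_1, \ldots, I_{n-2k+1}\}$, where $I_i = [p_i, p_{i+2k-1}]$ contains exactly the $2k$ consecutive points $p_i, \ldots, p_{i+2k-1}$ of $P$. Every member of $\mathcal{I}$ has the required size, so it suffices to fix an arbitrary $B \in \mathcal{B}$ and exhibit some $I_\ell \in \mathcal{I}$ that is balanced under $B$.

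Let $r_i$ denote the number of red points of $I_i$ with respect to $B$. Since $I_{i+1}$ is obtained from $I_i$ by removing $p_i$ and inserting $p_{i+2k}$, we have $r_{i+1} - r_i \in \{-1, 0, +1\}$. An interval $I_\ell$ is balanced precisely when $r_\ell = k$, so a standard discrete intermediate-value argument shows that the only way the sequence $(r_i)$ can avoid the value $k$ altogether is if either $r_i \geq k+1$ for every $i$, or $r_i \leq k-1$ for every $i$. It therefore suffices to rule out both extremes.

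The main obstacle is eliminating these two scenarios from the hypothesis on $\mathcal{B}$. Suppose, toward contradiction, that $r_i \geq k+1$ for every $i$ (the opposite case is symmetric); then each window contains at most $k-1$ blue points. Write $n = 2kq + s$ with $q = \flo{\frac{n}{2k}}$ and $0 \leq s < 2k$, and cover $P$ using the $q$ pairwise disjoint windows $I_1, I_{2k+1}, \ldots, I_{2k(q-1)+1}$ together with the trailing window $I_{n-2k+1}$. When $s > 0$, the trailing window picks up the leftover points $p_{2kq+1}, \ldots, p_n$ while overlapping the last disjoint window; when $s = 0$ it simply coincides with $I_{2k(q-1)+1}$. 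Either way, $P$ is covered by at most $q+1$ members of $\mathcal{I}$, so the total number of blue points in $P$ is at most $(q+1)(k-1) = \flo{\frac{n}{2k}+1}(k-1)$, using the identity $\flo{\frac{n}{2k}} + 1 = \flo{\frac{n}{2k}+1}$. This contradicts the hypothesis that $B$ has strictly more than $\flo{\frac{n}{2k}+1}(k-1)$ blue points, and hence forces the existence of a balanced $I_\ell$. The delicate case is $s > 0$, where no clean partition of $P$ into size-$2k$ windows exists; the overlapping trailing window is exactly what makes the additive $+1$ inside the floor of the hypothesis tight.
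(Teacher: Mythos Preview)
Your proof is correct and follows essentially the same approach as the paper: take the full sliding-window family, use a discrete intermediate-value argument on the red-count sequence to force all windows onto one side of $k$, and then cover $P$ by $\lfloor n/(2k)\rfloor$ disjoint windows plus one trailing window to bound the minority color by $\lfloor n/(2k)+1\rfloor(k-1)$. The paper phrases the intermediate-value step as an induction showing every window stays ``red'', but the content is identical.
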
  

\begin{proof}
 
Let $B$ be a bicoloring of $P$ containing more than $\flo{\frac{n}{2k}+1}(k-1)$ 
 red and $\flo{\frac{n}{2k}+1}(k-1)$ blue points.

Consider now the set of consecutive disjoint intervals of size $2k$. More formally,
$$\I=\{I_j| I_j=[p_j,p_{j+2k-1}] \text{ where } 1\leq j\leq n-2k+1\}.$$

Let $r_j$ (resp. $b_j$) be the number of red (resp. blue) points in the interval $I_j$. We say that an interval in $\I$ is \defn{red} (resp. \defn{blue}) if $r_j>b_j$ (resp. $b_j>r_j$). 

We claim that $\I$ is a G-SUR. For the sake of contradiction, we suppose that every interval from $\I$ is either red or blue. We may assume $I_1$ is red.

\begin{claim}\label{claim-red}
Every interval $I_j\in \I$ is red.
\end{claim}

\begin{proof}
We prove this by induction on $j$. By assumption, $I_1$ is red. Now, suppose that $I_j$ is red, that is, $r_j>b_j$, so $r_j\geq k+1$. As we shift from $I_j$ to $I_{j+1}$ we lose or gain at most one red point, so $r_{j+1}\geq r_{j}-1 \geq k$. It is then impossible for $I_{j+1}$ to be blue. Since each interval is either red or blue, $I_{j+1}$ must be red.
\end{proof}

Consider now the set of disjoint intervals $$\I'=\{I_j | j\equiv 1\bmod{2k}, 1\leq j\leq n-2k+1\}.$$ The intervals from $\I'$ and the interval $I_{n-2k+1}$ cover the entire set $P$. Since every interval from $\I$ is red, it has at most $k-1$ blue points. Therefore, the coloring has at most $$\flo{\frac{n}{2k}+1}(k-1)$$ blue points. This yields a contradiction to the number of blue points of $B$ given by the hypothesis, so $\I$ must have a balanced interval with respect to $B$. 
\end{proof}

The result in Theorem \ref{thm:intervalk} is tight, in the sense that if we have fewer points of either color we cannot guarantee the existence of a G-SUR of size $2k$. This can be witnessed by an example on $n=3k-1$ points on the real line and the coloring $B$ whose first $k-1$ and last $k-1$ points are blue, and the middle $k+1$ ones are red.

An analogous proof shows that Theorem \ref{thm:intervalk} extends to point sets in $\mathbb{R}^d$ and G-SURs consisting of axis-parallel boxes of size $2k$.

\subsection{G-SURs for $m$-restricted Bicolorings}
For $m\leq n/2$, a bicoloring $B$ of $P$ is \defn{$m$-restricted} if it contains 
at least $m$ points of each color. In this section we study the size of G-SURs for $m$-restricted colorings.

If $n$ is even and $m=n/2$, then there is a G-SUR of size $1$: the one consisting of the interval $[p_1,p_n]$. Otherwise, we have the following result.

\begin{theorem}\label{theorem-m-res}
Let $P=\{p_1,\ldots,p_n\}$ be a set of $n$ points on the real line and $m<n/2$ a positive integer. Then,
\begin{enumerate}
 \item[(a)] There exists a set of $n-m$ bicolorings $\B$, for which any G-SUR consisting of intervals has size at least $n-m$.

 \item[(b)] There exists a set $\mathcal{I}$ of $n-m$ intervals such that for any bicoloring $B$ of $P$ there is at least one balanced interval in $\mathcal{I}$ with respect to $B$. 
\end{enumerate}
\end{theorem}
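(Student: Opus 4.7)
My plan is to extend the approach of Theorem~\ref{theorem-line} to the $m$-restricted setting, reading the phrase ``any bicoloring'' in (b) as ``any $m$-restricted bicoloring'' in context. For (b), I take $\I = \{[p_i,p_{i+1}] : 1 \leq i \leq n-m\}$, a collection of $n-m$ length-two intervals between consecutive points. The verification is brief: if no interval in $\I$ is balanced under an $m$-restricted $B$, then $B(p_i) = B(p_{i+1})$ for every $i \leq n-m$, so the prefix $p_1,\ldots,p_{n-m+1}$ must be monochromatic. The opposite color then occupies at most $m-1$ points, contradicting $m$-restrictedness.

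For the lower bound in (a), I construct a family $\B$ of $n-m$ bicolorings in two groups. First, for $i \in [m, n-m]$, let $B_i$ color $p_1,\ldots,p_i$ red and the remainder blue; this gives $n-2m+1$ bicolorings. Second, for $j \in [1, m-1]$, let $B'_j$ color $p_1,\ldots,p_j$ and $p_{n-m+j+1},\ldots,p_n$ blue with the middle block red, producing exactly $m$ blues and $n-m$ reds (hence $m$-restricted). Altogether $|\B| = n-m$. Using the shortening trick of Theorem~\ref{theorem-line}, it suffices to consider balanced intervals with endpoints in $P$. A short case analysis on which color boundaries are crossed shows that integer-endpoint balanced intervals of $B_i$ satisfy $a+b = 2i+1$, while those of $B'_j$ satisfy $a+b = 2j+1$ (crossing only the left blue-red boundary) or $a+b = 2(n-m)+2j+1$ (crossing only the right one).

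The crux is to verify that the three sum sets
\[
\{3,5,\ldots,2m-1\},\quad \{2m+1,2m+3,\ldots,2(n-m)+1\},\quad \{2(n-m)+3,\ldots,2n-1\}
\]
are pairwise disjoint under the hypothesis $m < n/2$ and that within each set distinct bicolorings produce distinct sums. This forces any integer-endpoint interval to be balanced for at most one bicoloring of $\B$, so every G-SUR for $\B$ has size at least $n-m$. The one subtle step I anticipate is ruling out balanced intervals of $B'_j$ that cross \emph{both} blue-red boundaries simultaneously: the balance condition in that case forces the interval length to equal exactly $2(n-m)$, but $m < n/2$ makes this strictly greater than $n$, so no such interval is available. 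Once this geometric obstruction is in place, the rest reduces to routine arithmetic on endpoint sums.
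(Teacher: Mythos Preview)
Your proof is correct and follows essentially the same approach as the paper. For (b), your interval family $\{[p_i,p_{i+1}]:1\le i\le n-m\}$ is exactly the G-SUR the paper obtains by applying Theorem~\ref{theorem-line} to $P'=\{p_1,\ldots,p_{n-m+1}\}$, and your monochromatic-prefix argument is the direct unwinding of theirs. For (a), your two families $B_i$ and $B'_j$ are, after the reindexing $i\mapsto i-m+1$ and $j\mapsto j+n-2m+1$, precisely the paper's bicolorings; your disjoint ``sum sets'' are their three distinct families of symmetry centers, and your exclusion of intervals crossing both boundaries of $B'_j$ (via size $2(n-m)>n$) is equivalent to the paper's count of $n-m>m$ red versus at most $m$ blue.
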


\begin{proof}
To prove the second part of the theorem consider the set of points $$P'=\{p_1,\ldots,p_{n-m+1}\},$$ and let $\I$ be the G-SUR given in Theorem \ref{theorem-line} for $P'$, which is of size $n-m$. 
Let $B$ be any $m$-restricted bicoloring of $P$. Note that $P\setminus P'$ is of size $m-1$, so by definition it is impossible that all the red or blue points are completely contained in $P\setminus P'$. Then $P'$ contains both blue and red points, so the restriction $B'$ of $B$ to $P'$ is also a valid coloring for $P'$. We may then take a balanced interval $I\in\I$ with respect to $B'$. This interval is also balanced with respect to $B$, so $\I$ is a G-SUR for all the $m$-restricted bicolorings of $P$.

Now we prove that $n-m$ intervals are sometimes necessary. We may assume $P=\{1,2,\ldots,n\}$. We consider the set of $m$-restricted bicolorings $\B=\{B_1,\ldots,B_{n-m}\}$ defined as follows. For $i=1,2,\ldots,n-2m+1$, the bicoloring $B_i$ has the leftmost $m+i-1$ points colored red and the rest blue.  For $i=n-2m+2,\ldots,n-m$, the bicoloring $B_i$ has the points in the interval $$\{i-n+2m,\ldots,i+m-1\}$$ colored red and the remaining $m$ points colored blue. See Figure \ref{fig:mexample} for an example.

\begin{figure}[htbp]
 \centering
\includegraphics[scale=.7]{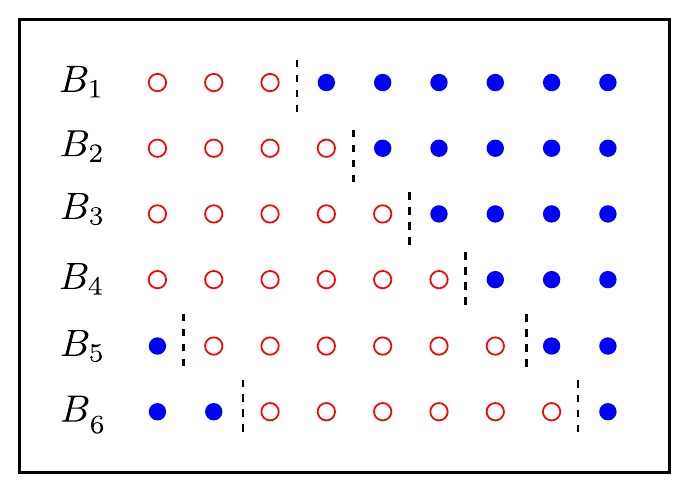}
\caption{An example of the construction in which $n-m$ intervals are needed in Theorem~\ref{theorem-m-res} for $n=9$ and $m=3$. Vertical dashed segments indicate where the symmetries must hold.}
\label{fig:mexample}
\end{figure}

As in the proof of Theorem \ref{theorem-line}, we may assume that a G-SUR consists of intervals with integral endpoints. For $i=1,2,\ldots,n-2m+1$, an interval balanced with respect to $B_i$ must be symmetric around $\frac{2m+2i-1}{2}$.

For $i\geq n-2m+2$, an interval balanced with respect to $B_i$ cannot have blue points from both the left and right sides, as otherwise it would have $n-m>n/2>m$ red points, but at most $m$ blue points. So it has to be symmetric either around the  $\frac{2i-2n+4m-1}{2}$ or around $\frac{2i+2m-1}{2}$.

Regardless of these final choices, we obtain intervals symmetric around $n-m$ different points, so they must all be different. This finishes the proof.

\end{proof}

\section{Hardness Results}\label{sec:hardness}
In this section we study the computational aspect of finding minimal size G-SURs for points on the real line using interval ranges. We receive as input a set of $n$ points $P=\{p_1,\ldots,p_n\}$ and a family of bicolorings $\B=\{B_1,\ldots,B_m\}$ of $P$. We expect as output the size of the minimal G-SUR consisting of interval ranges. We denote this problem as $\Isur$.

\begin{theorem}\label{hardness}
 The $\Isur$ problem is NP-hard. 
\end{theorem}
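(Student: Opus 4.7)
The plan is to reduce from the classical NP-hard \Set problem. Given a \Set instance with universe $U=\{e_1,\ldots,e_m\}$ and family $\mathcal{F}=\{S_1,\ldots,S_k\}$ (we may assume, without loss of generality, that every element lies in some set), I would construct in polynomial time an $\Isur$ instance as follows. For each set $S_j$, place two ``gadget'' points $a_j < b_j$ on the real line; between any two consecutive gadgets, and before the first and after the last, insert $L=k+1$ ``spacer'' points. For each element $e_i$ define a bicoloring $B_i$ that colors every spacer red, every $a_j$ red, and each $b_j$ blue if $e_i\in S_j$ and red otherwise.

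The easy direction is immediate: the interval $[a_j,b_j]$ is balanced with respect to $B_i$ exactly when $e_i\in S_j$, so any set cover of size $t$ yields a G-SUR of the same size. For the reverse direction the key structural claim is that every interval $I$ that is balanced for at least one $B_i$ is attached to a unique gadget index $j(I)$, in the sense that $I$ is balanced for $B_{i'}$ if and only if $e_{i'}\in S_{j(I)}$. The padding is chosen precisely for this purpose: any interval spanning two different gadgets must contain an entire intermediate block of $L=k+1$ spacers, contributing more than $k$ red points, while any bicoloring $B_i$ has at most $k$ blue points in total, so such a wide interval cannot possibly be balanced. A short case analysis within a single gadget then shows that the only balanced configurations are $[a_j,b_j]$ itself, and, when $e_i\in S_j$, intervals of the form $[b_j,s]$ or $[s,b_j]$ with $s$ a single adjacent spacer; all of these balance exactly the $B_i$ with $e_i\in S_j$.

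Granted this claim, a G-SUR $\{I_1,\ldots,I_t\}$ produces a set cover $\{S_{j(I_l)} : 1\leq l\leq t\}$ of size at most $t$, because for every $e_i$ some $I_l$ balances $B_i$ and the structural claim then forces $e_i\in S_{j(I_l)}$. Consequently the minimum G-SUR size equals the minimum set cover size on the constructed instance, and since the construction uses only $O(k^2)$ points and $m$ bicolorings, we obtain a polynomial-time reduction and hence NP-hardness of $\Isur$. The main technical obstacle will be the careful case enumeration underpinning the structural claim, in particular ruling out balanced intervals that include one full gadget together with part of a neighboring one; choosing the spacer count $L$ strictly larger than the total blue budget $k$ is exactly what makes this enumeration go through cleanly.
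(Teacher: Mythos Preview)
Your proposal is correct and follows essentially the same approach as the paper: a polynomial reduction from \Set in which each set becomes a two-point gadget with padding between consecutive gadgets, each universe element becomes a bicoloring, and the key structural lemma is that every balanced interval is attached to a unique gadget (so that a minimum G-SUR and a minimum set cover have the same size). The only differences are cosmetic---the paper swaps the color roles (dummy points blue, rare color red) and uses just two dummies between gadgets instead of $k+1$ spacers, ruling out wide balanced intervals via a local density count (at least three blues between any two reds) rather than your global blue-budget argument.
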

\begin{figure}[h]
 \centering
\includegraphics[scale=.9]{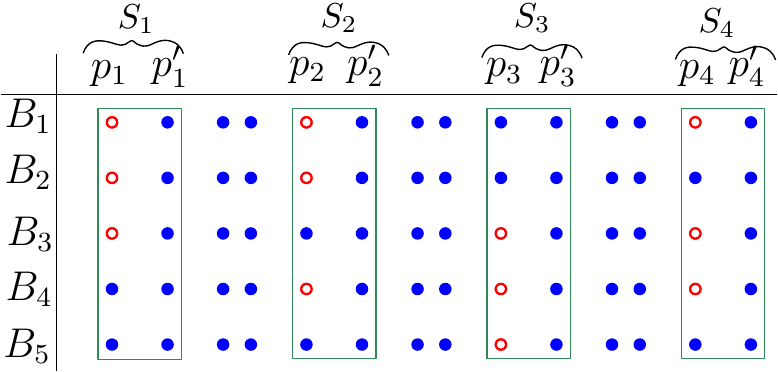}\vspace{-0.1in}
\caption{An illustration of the construction used in Theorem~\ref{hardness}. In each bicoloring $B_i$, 
the blue points between two consecutive pairs $\{p_i,p'_i\}$ 
and $\{p_{i+1},p'_{i+1}\}$ are dummy points.} \vspace{-0.1in} 
\label{fig:np}
\end{figure}

\begin{proof}
We give a reduction from the \Set problem.
In the \Set problem, we are given a set of elements $X=\{x_1,\ldots,x_n\}$ (called the universe),
and a collection $\Sm=\{S_1,\ldots,S_m\}$ of $m$ subsets, where each $S_i \subseteq X$, and $\bigcup_{1\leq i\leq m} S_i = X$.
The goal is to identify the smallest sub-collection of $\Sm$ whose union equals the universe.

From an instance $(X,\Sm)$ of the \Set problem we create an instance
$(\B,P)$ of the \Isur problem in the following manner. 
For every set $S_i$, we create a pair of consecutive points $\{p_i,p'_i\}$.
The points are in the following order 
$\{\{p_1,p'_1\}, \{p_2,p'_2\}, \ldots, \{p_m,p'_m\}\}$ on the line.
Besides for any two consecutive pair of points $\{p_i,p'_i\}$ and $\{p_{i+1},p'_{i+1}\}$, we introduce two dummy points between them. 
For every $x_j \in X$ we construct a bicoloring $B_j$ as follows: 
the points $p_i$ and $p'_i$ are colored red and blue, respectively,
if $x_j \in S_i$. Otherwise, the points $p_i$ and $p'_i$ 
are colored blue. 
Furthermore, we color all the dummy points blue in any bicoloring. 
This completes the construction.
We illustrate an instance $(\B,P)$ that is
reduced from the set system $S_1=\{x_1,x_2,x_3\}, S_2=\{x_1,x_2,x_4\},
S_3=\{x_3,x_4,x_5\}$, $S_4=\{x_1,x_3,x_4\}$ 
(see Figure~\ref{fig:np}).

In the forward direction, we show that if there is a solution of the \Set problem of size $k$
then there is a solution of \Isur problem of size $k$.
Assume that the \Set problem has a solution $\Sm^*$ where $|\Sm^*|\leq k$. We construct a set $\I^*$ of intervals cardinality $k$ as follows.
$\I^*=\{(p_i,p'_i)| S_i\in \Sm^*\}$.
The claim is $\I^*$ is a solution of \Isur problem. 
Otherwise, there exists a bicoloring $B_i\in \B$ such that there is no balanced interval in $\I^*$
with respect to $B_i$.
However we know there exists a set $S_j\in \Sm^*$ that contains $x_i$.
Thus the interval $(p_j,p'_j)$ is already taken in $\I^*$, and 
by construction it is balanced with respect to $B_i$.
Hence the claim holds.

Conversely, suppose $\I^*$ is a solution of the \Isur of cardinality at most $k$.
Observe that, due to the construction of $(\B,P)$, any interval 
that is not either of the form $(p_i,p'_i)$ or $(d,p_i)$ (where $d$ is a dummy point) 
is not balanced with respect to any $B_i$.
Hence we may assume that any interval in $\I^*$ has one of these two forms.
Next, we construct the following set $\Sm^*=\{S_j| (p_j,p'_j)\in \I^* \text{ or }
 (d,p_j)\in \I^* \}$.
Observe that $|\Sm^*|\leq k$.
We claim $\Sm^*$ is a set cover for the set system $(X,\Sm)$.
If not, there exists an element $x_i\in X$ that is not covered.
Consider the corresponding bicoloring $B_i$.
We know there is a balanced interval $I\in \I^*$ with respect to $B_i$. 
By construction, $I$ can be either $(p_j,p'_j)$ or $(d,p_j)$.
Hence we know $S_j\in \Sm^*$, and $x_i\in S_j$.
This contradicts our assumption and the claim holds.
\end{proof}

\section{Points in $\mathbb{R}^d$ and Ball G-SURs}\label{high-d}
Let $P=\{p_1.\ldots,p_n\}$ be a set of points in $\mathbb{R}^d$. Given a family of bicolorings $\B$, here the goal is to find a G-SUR ${\Oo}^*$ consisting of Euclidean balls. We show general bounds for the size of ${\Oo}^*$. 

To give a lower bound, we embed the one-dimensional example from Lemma~\ref{theorem-line} in a line $\ell$ of $\mathbb{R}^d$ and note that a ball is balanced if and only if the interval resulting from intersecting the ball with $\ell$ is balanced. Any ball creates at most one such interval on $\ell$, so a G-SUR must have size at least $n-1$.

We now show that $n-1$ balls always suffice. For this we consider the \defn{Gabriel graph} $G(P)$ whose vertex set is $P$ and there is an edge $(x,y)$ 
when the closed ball with diameter on the line segment $xy$ contains no other point of $P$.

\begin{lemma}\label{lemma:gabriel}
The Gabriel graph is connected.
\end{lemma}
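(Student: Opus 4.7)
The plan is to prove connectivity by strong induction on the Euclidean distance $|pq|$ between pairs of points $p,q \in P$ that we wish to show are in the same component of $G(P)$. Since $P$ is finite, there are only finitely many pairwise distances, so this induction is well-founded. The base case (the pair realizing the minimum pairwise distance) is immediate: the ball with that diameter cannot contain any other point of $P$ without contradicting minimality, so that pair is an edge of $G(P)$.

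For the inductive step, fix two distinct points $p, q \in P$ and let $B_{pq}$ denote the closed ball with diameter on segment $pq$. If $B_{pq} \cap P = \{p, q\}$, then $(p,q)$ is already an edge of $G(P)$ and we are done. Otherwise, pick any $z \in B_{pq} \cap P$ with $z \neq p, q$. The key geometric fact I would invoke is that any point $z$ in the closed ball with diameter $pq$ satisfies $\angle pzq \geq \pi/2$, which via the law of cosines gives $|pz|^2 + |qz|^2 \leq |pq|^2$. Since $z$ is distinct from both $p$ and $q$, both $|pz|$ and $|qz|$ must be strictly smaller than $|pq|$. By the induction hypothesis applied to the pairs $(p,z)$ and $(z,q)$, both pairs lie in a common component of $G(P)$, and transitivity yields that $p$ and $q$ also lie in a common component.

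Since this argument applies to every pair $p, q \in P$, the graph $G(P)$ is connected.

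The main obstacle is essentially nonexistent; the only subtlety is ensuring the strict inequalities $|pz| < |pq|$ and $|qz| < |pq|$, which rely on $z \notin \{p, q\}$ together with the right-angle characterization of the diameter-ball. An alternative framing would be to note that the Euclidean minimum spanning tree on $P$ is a subgraph of $G(P)$ (since any edge $(x,y)$ of the EMST whose diameter-ball contained another point $z$ could be replaced by a shorter edge to $z$, contradicting minimality), and the EMST is connected by construction; I prefer the direct inductive argument above since it is self-contained and avoids invoking the EMST.
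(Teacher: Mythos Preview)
Your proof is correct. The identity $|pz|^2+|qz|^2\le |pq|^2$ for $z$ in the diameter-ball of $pq$ holds in any $\mathbb{R}^d$ (via the parallelogram law, or equivalently the right-angle characterization you cite restricted to the affine plane through $p,q,z$), and the strict inequalities follow since $z\notin\{p,q\}$. The induction on the finite set of pairwise distances is well-founded, so the argument goes through.

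The paper takes a different route: it shows that for every nontrivial partition $P=Q\cup R$ there is a Gabriel edge crossing the cut, by taking a closest pair $(q,r)\in Q\times R$ and observing that any third point in its diameter-ball would give a strictly shorter cross-cut pair. Your argument instead connects each specific pair $(p,q)$ by recursing through an interior point $z$, which is more constructive in spirit (it implicitly builds a path). The paper's cut argument is essentially the ``cut property'' that underlies your EMST alternative, so your second suggestion is in fact closer to the paper's proof than your main one. Both approaches are equally short; the cut argument has the minor advantage of avoiding any mention of induction or well-ordering of distances, while yours makes the connectivity witness (a path) more explicit.
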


This result is well-known on the plane (see e.g. \cite{de1997computational}). For completeness, here we provide a proof which works in higher dimensions.

\begin{proof}
It is enough to show that for any partition $P=Q\cup R$ of the vertices of $G(P)$ there is an edge between a vertex of $Q$ and a vertex of $R$. Let $(q,r)$ be a pair of closest points $q\in Q$ and $r\in R$, that is, that minimize $d(q,r)$.

If the ball with diameter $qr$ contains another point $r'$ from, say, $R$, then $d(q,r')<d(q,r)$, a contradiction. Similarly, this ball cannot contain another point from $Q$. So $qr$ is an edge of $G(P)$, and the proof is complete.
\end{proof}

\begin{lemma}\label{lemma:sufficient-ball}
For any set $P$ of $n$ points in $\mathbb{R}^d$ and any family $\B$ of bicolorings of $P$, 
there exists a G-SUR consisting of $n-1$ Euclidean balls.
\end{lemma}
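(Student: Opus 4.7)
The plan is to leverage the connectedness of the Gabriel graph established in Lemma \ref{lemma:gabriel} to produce a concrete family of $n-1$ balls, each containing exactly two points of $P$, which will form the desired G-SUR.

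First, I would extract from $G(P)$ a spanning tree $T$, which is possible by Lemma \ref{lemma:gabriel}. Since $T$ has exactly $n-1$ edges, I can define $\Oo' = \{\B_{xy} : (x,y) \in E(T)\}$, where $\B_{xy}$ denotes the closed ball with diameter $xy$. By the very definition of the Gabriel graph, each such ball $\B_{xy}$ contains exactly two points of $P$, namely its diametral endpoints $x$ and $y$. Hence $|\Oo'| = n-1$.

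Next, I would verify that $\Oo'$ is a G-SUR. Fix any bicoloring $B \in \B$. Since $\B_{xy} \cap P = \{x,y\}$, the ball $\B_{xy}$ is balanced with respect to $B$ if and only if $B(x) \neq B(y)$. Thus it suffices to show that $T$ contains at least one edge whose endpoints receive different colors under $B$. This follows from a standard argument: since $B$ has at least one red point $r$ and at least one blue point $b$, consider the unique path in $T$ from $r$ to $b$. The color must flip somewhere along this path, yielding an edge of $T$ whose endpoints are colored differently. The corresponding ball in $\Oo'$ is balanced with respect to $B$.

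There is no real obstacle here beyond invoking Lemma \ref{lemma:gabriel} and the defining property of Gabriel edges; the heart of the argument is the observation that the Gabriel edges give balls of cardinality exactly two, which converts the bichromatic-edge-existence in a spanning tree into the existence of a balanced range. Combined with the lower bound already observed by embedding the construction from Theorem \ref{theorem-line} on a line $\ell \subset \mathbb{R}^d$ (noting that any Euclidean ball intersects $\ell$ in an interval, so a balanced ball yields a balanced interval), this establishes the tight bound of $n-1$.
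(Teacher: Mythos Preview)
Your proof is correct and follows essentially the same approach as the paper: take a spanning tree of the Gabriel graph, associate to each edge the diametral ball (which contains exactly its two endpoints by the definition of Gabriel edges), and use the path argument to find a bichromatic edge for any given bicoloring. One minor remark: your notation $\B_{xy}$ for the balls clashes with the paper's use of $\B$ for the family of bicolorings, so you may want to rename them; also, the final paragraph about the lower bound is extraneous here since the lemma only asserts sufficiency.
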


\begin{proof}
Consider the set of $n-1$ edges $E$ of a spanning tree $T$ of the Gabriel graph $G(P)$. For every edge $e=(p,q)\in E$, let $O_e$ be the ball with diameter $pq$.
Let $\Oo=\{O_e| e\in E\}$. We claim that $\Oo$ is a G-SUR.

For any bicoloring $B$ of $P$ there is at least a red point $r$ and a blue point $b$.
Since $T$ is connected, there is a path on $T$ that connects $r$ to $b$, and thus there is an edge in this path with endpoints of opposite colors with respect to $B$. The ball corresponding to this particular edge is balanced, as it only contains $r$ and $b$.
\end{proof}

Thereby we conclude the following theorem. 

\begin{theorem}\label{theorem-ball}
Let $d\geq 1$ and $n\geq 2$ be positive integers. Then the following hold: 

\begin{enumerate}
 \item[(a)] There exists a set $P$ of $n$ points in $\mathbb{R}^d$ and a family of $n-1$ bicolorings $\B$ for $P$, for which any G-SUR consisting of Euclidean balls has size at least $n-1$.

 \item[(b)] For any set of $n$ points $P$ in $\mathbb{R}^d$ there exists a set $\Oo$ of $n-1$ Euclidean balls such that for any bicoloring $B$ of $P$ there is at least one balanced ball 
 in $\Oo$ with respect to $B$. 
\end{enumerate}

\end{theorem}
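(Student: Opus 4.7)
The plan is to assemble the theorem from two ingredients that are already in hand: Theorem \ref{theorem-line}(a) for the lower bound in part (a), and Lemma \ref{lemma:sufficient-ball} for the upper bound in part (b). Neither part requires a new construction from scratch, so the proof becomes a short bookkeeping argument once the reduction between the 1D and $d$-dimensional settings is spelled out.

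For part (a), I would embed the one-dimensional hard instance into $\mathbb{R}^d$. Pick any line $\ell \subset \mathbb{R}^d$, place $n$ points $p_1,\ldots,p_n$ on $\ell$ in the same left-to-right order as in the instance of Theorem \ref{theorem-line}(a), and reuse the same family $\B$ of $n-1$ bicolorings. The key observation is that a Euclidean ball $O$ in $\mathbb{R}^d$ intersects $\ell$ in an interval $I(O)$ of $\ell$, and since $P \subseteq \ell$, the set $P \cap O$ equals $P \cap I(O)$. Hence $O$ is balanced with respect to a bicoloring $B$ if and only if $I(O)$ is balanced with respect to $B$. If there were a ball G-SUR for $\B$ of size less than $n-1$, replacing each ball by its interval on $\ell$ would produce an interval G-SUR of the same size for the one-dimensional instance, contradicting Theorem \ref{theorem-line}(a).

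For part (b), Lemma \ref{lemma:sufficient-ball} already produces the required family: take a spanning tree $T$ of the Gabriel graph $G(P)$, which exists because $G(P)$ is connected by Lemma \ref{lemma:gabriel}, and assign to each edge $pq$ of $T$ the closed ball with diameter $pq$. This yields $n-1$ balls that, by the defining property of the Gabriel graph, contain exactly their two endpoints among the points of $P$. Connectivity of $T$ then guarantees that every bicoloring $B$ (which uses both colors by definition) admits an edge of $T$ whose endpoints have different colors under $B$; the corresponding ball is balanced.

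There is no substantive obstacle remaining; the only thing to get right is the equivalence used in the embedding for (a), namely that ``points of $P$ contained in $O$'' and ``points of $P$ contained in $O \cap \ell$'' coincide because $P \subseteq \ell$. Once this is noted, the two bounds match at $n-1$ and the theorem follows.
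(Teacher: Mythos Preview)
Your proposal is correct and follows essentially the same approach as the paper: the lower bound in (a) is obtained by embedding the one-dimensional instance of Theorem~\ref{theorem-line}(a) on a line $\ell\subset\mathbb{R}^d$ and observing that balls intersect $\ell$ in intervals, while the upper bound in (b) is exactly Lemma~\ref{lemma:sufficient-ball} via a spanning tree of the Gabriel graph. No differences worth noting.
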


\section{Balanced Covering on Random Points on a Line}\label{random}

In this section we study the properties of balanced intervals for random bicolorings of points on a line. Consider a set $P=\{p_1, p_2, \ldots, p_{n+m}\}$ on a line, pick a subset of $P$ of size $m$ uniformly at random, color these points red. Color the remaining $n$ points blue. Define the random variables: 
\begin{itemize}
\item[--] $\mathcal{T}_{m,n}=$~the size of the smallest balanced interval.
\item[--] $\mathcal{S}_{m,n}=$~the size of the largest balanced interval,
\end{itemize}


We are interested on the asymptotic behaviour of $\mathcal{T}_{m,n}$ and $\mathcal{S}_{m,n}$ as $m$ and $n$ become large. Due to space constraints, here we focus only in the case in which $m$ is much smaller compared to $n$. In this situation we have the following result.

\begin{theorem}\label{thm:dist} For $m=o(\sqrt{n})$, 
$$\mathcal{S}_{m,n}=\mathcal{T}_{m,n}=2,$$ 
with high probability.\footnote{Here we use the usual convention that $X_n=x$ \textit{with high probability} if $\lim_{n\to\infty}\mathbb{P}(X_n=x)=1$.}
\end{theorem}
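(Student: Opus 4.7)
The plan is to establish the two matching upper bounds $\mathcal T_{m,n}\le 2$ and $\mathcal S_{m,n}\le 2$ (both with high probability) separately, since the corresponding lower bounds $\mathcal T_{m,n},\mathcal S_{m,n}\ge 2$ are automatic: every balanced interval must contain at least one red and one blue point. Throughout I will index the $n+m$ points in sorted order along the line and use the fact that a balanced interval corresponds to a contiguous window of $2k$ sorted points containing exactly $k$ of each colour.

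For $\mathcal T_{m,n}\le 2$, some balanced interval of size $2$ exists precisely when two adjacent points in the sorted order receive opposite colours. When $m\ge 2$, the only colourings failing this are those in which the $m$ reds occupy a contiguous block of sorted positions, and there are exactly $n+1$ such colourings; hence
$$\Pr[\mathcal T_{m,n}>2] \;\le\; \frac{n+1}{\binom{n+m}{m}} \;\le\; \frac{(n+1)\,m!}{n^m} \;=\; o(1).$$
The case $m=1$ is immediate because the single red point is adjacent to a blue.

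For $\mathcal S_{m,n}\le 2$, which is the substantive half, let $X_k$ count the windows of $2k$ consecutive sorted points containing exactly $k$ reds. A hypergeometric computation gives
$$\mathbb{E}[X_k] \;=\; (n+m-2k+1)\,\frac{\binom{2k}{k}\binom{n+m-2k}{m-k}}{\binom{n+m}{m}},$$
and by writing the ratio of binomials as a product of falling factorials and bounding each factor under the assumption $m\le n$, I would obtain the clean estimate $\mathbb{E}[X_k]\le 2n\binom{2k}{k}(m/n)^k \le 2n(4m/n)^k$. Since $\{\mathcal S_{m,n}\ge 4\}=\bigcup_{k\ge 2}\{X_k\ge 1\}$, Markov's inequality combined with a geometric-series bound yields
$$\Pr[\mathcal S_{m,n}\ge 4] \;\le\; \sum_{k=2}^{m}\mathbb{E}[X_k] \;\le\; 2n\cdot\frac{(4m/n)^2}{1-4m/n} \;=\; O\!\left(\frac{m^2}{n}\right)$$
once $n$ is large enough that $4m/n<\tfrac12$, and the hypothesis $m=o(\sqrt n)$ forces this probability to vanish.

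The main technical point I anticipate is the uniform-in-$k$ bound on the hypergeometric ratio: it must be written so that the $k$-dependent factor grows no faster than $\binom{2k}{k}\le 4^k$, so that the tail can be summed as a geometric series rather than bounded crudely. Once this estimate is in place, the dominant contribution comes from $k=2$ and the whole argument reduces to the elementary fact $m^2/n=o(1)$.
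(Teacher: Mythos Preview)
Your treatment of $\mathcal{T}_{m,n}$ contains a slip: a colouring in which the $m$ red points form a contiguous block certainly \emph{does} have two adjacent points of opposite colours, namely at the boundary of the block. In fact no colouring avoids such a pair, so $\mathcal{T}_{m,n}=2$ holds deterministically; this is exactly how the paper dispatches it, in one line, with no probabilistic estimate needed. Your conclusion is of course unaffected, since you are overcounting the bad event.

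For $\mathcal{S}_{m,n}$ your argument is correct and takes a genuinely different route from the paper. The paper works on the complementary side: it defines the event $E$ that any two consecutive red points (and the ends) are separated by at least three blues, observes that $E$ forces $\mathcal{S}_{m,n}<4$ (since $k\ge 2$ reds with such spacing require at least $4k-3>2k$ consecutive positions), counts the colourings in $E$ by a stars-and-bars block argument, and obtains $\mathbb{P}(E)\ge 1-m\cdot\frac{3m+3}{n+1}\to 1$. Your first-moment method instead bounds $\mathbb{P}(\mathcal{S}_{m,n}\ge 4)$ directly by $\sum_{k\ge 2}\mathbb{E}[X_k]=O(m^2/n)$ via the hypergeometric estimate and a geometric tail sum. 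Both arguments ultimately reduce to $m^2/n\to 0$; the paper's counting is shorter and avoids any asymptotic manipulation of binomial ratios, while your approach is more systematic and would extend more readily to refined regimes (e.g.\ identifying the threshold for $\mathcal{S}_{m,n}\ge 2k$ for fixed $k$) or to related window statistics.
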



\begin{proof}

The equality $\mathcal{T}_{m,n}=2$ is direct because there are always two consecutive points of different color. 


Next, we consider $\mathcal{S}_{m,n}$. Without loss of generality, we may assume that $n>3(m+2)$. Let $E$ be the event that are at least three blue points between each pair of red points, before the first red point and after the last red point. Note that the event $\mathcal{S}_{m,n}\geq 4$ is impossible if $E$ happens. 

We can calculate the probability of $E$ happening using the following argument. Consider \defn{blocks} of colors of type $bbbrbbb$, $rbbb$ and $b$.
Each situation in which $E$ happens corresponds to placing a $bbbrbbb$ block, then $m-1$ blocks $rbbb$ to its right, and then distributing the remaining $b$'s in between these blocks. Therefore, the number of situations in which the event happens is


\[
  \binom{m+n-3(m-1)-6}{m}=\binom{n-2m-3}{m}
\]

The probability space has size $\binom{m+n}{m}$. Therefore, by Bernoulli's inequality,

\begin{align*}
\mathbb{P}(E)&=\frac{\binom{n-2m-3}{m}}{\binom{m+n}{m}}=\frac{(n-2m-3)!n!}{(n-3m-3)!(m+n)!}\\
&=\prod_{j=0}^{m-1}\left(1-\frac{3m+3}{m+n-j}\right)
             \\
             &\geq \left(1-\frac{3m+3}{n+1}\right)^{m}\\
             &\geq 1-m\left(\frac{3m+3}{n+1}\right).
\end{align*}

Since $m=o(\sqrt{n})$, the right hand side converges to $1$ as $n$ goes to infinity. This means that with high probability the event $E$ happens, and therefore, with high probability  $\mathcal{S}_{m,n}=2$. \end{proof}


In the discrete model presented above the points are equally spaced. In practical applications this is not always the case. Thus we can also study an analogous problem in the following continuous model which takes into consideration the distance between random samples.

We independently and uniformly sample $m$ points from the interval $[0,1]$ and color them red, and, similarly, sample $n$ independent and uniform points and color them blue. By symmetry, any of the red/blue discrete orderings are equally probable, and thus they distribute as in the discrete model above. Therefore, as before, $\mathcal{S}_{m,n}=\mathcal{T}_{m,n}=2$, with high probability. 
Furthermore, in this case, we can also consider the length of the balanced intervals. More precisely, 

\begin{itemize}
\item[--] $\mathcal{M}_{m,n}=$~the length of the shortest balanced interval,
\item[--] $\mathcal{L}_{m,n}=$~the length of the longest balanced interval. 
\end{itemize}

Once more, suppose that $m=o(\sqrt{n})$. Since $\mathcal{S}_{m,n}=2$ with high probability, the largest balanced interval must have as endpoints two consecutive points with high probability. Moreover, as $n$ increases, the maximum spacing between consecutive blue points converges to $0$ almost surely. These two remarks give a sketch of the proof for the following theorem.

\begin{theorem}\label{thm:distcont} For $m=o(\sqrt{n})$, 
	$\mathcal{M}_{m,n}$ and $\mathcal{L}_{m,n}$ converge to $0$ almost surely.
\end{theorem}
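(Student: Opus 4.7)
The plan is to bound $\mathcal{L}_{m,n}$ directly by the maximum spacing between consecutive sample points, bypassing the intermediate step $\mathcal{S}_{m,n}=2$. Let $x_{(1)}<\cdots<x_{(m+n)}$ denote the sorted sample points and let $G_{m,n}$ be the largest of the values $x_{(1)}$, $x_{(i+1)}-x_{(i)}$ for $1\le i\le m+n-1$, and $1-x_{(m+n)}$. If $I$ is a balanced interval containing $2k$ sample points $y_1<\cdots<y_{2k}$, then $I$ can contain neither the immediate neighbor of $y_1$ on the left nor of $y_{2k}$ on the right, hence $|I|$ is at most the sum of $2k+1$ consecutive gaps of the extended sample $\{0,x_{(1)},\dots,x_{(m+n)},1\}$, each at most $G_{m,n}$. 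Since trivially $k\le m$ for any balanced interval, I obtain the deterministic estimate
\[
\mathcal{L}_{m,n}\;\le\;(2m+1)\,G_{m,n}.
\]

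To control $G_{m,n}$ I would apply the classical tail bound $\mathbb{P}(G_{m,n}>t)\le (m+n+1)(1-t)^{m+n}$, proven by a union bound over the $m+n+1$ gaps, each exceeding $t$ with probability $(1-t)^{m+n}$ by exchangeability of uniform spacings. Realising all sample sets on one probability space via an infinite i.i.d.\ uniform sequence --- and labelling the first $m=m(n)$ of its first $m+n$ points as red --- I can take $t=c\log n/n$ with $c>2$ so that the tail probabilities are summable in $n$, and Borel--Cantelli gives $G_{m,n}=O(\log n/n)$ almost surely. Plugging this into the inequality above and using $m=o(\sqrt n)$ yields
\[
\mathcal{L}_{m,n}\;=\;O\!\left(\frac{m\log n}{n}\right)\;=\;o\!\left(\frac{\log n}{\sqrt n}\right)\;\longrightarrow\;0
\]
almost surely. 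Because at least one oppositely coloured pair always exists, at least one balanced interval exists and so $\mathcal{M}_{m,n}\le\mathcal{L}_{m,n}$; the conclusion transfers immediately to $\mathcal{M}_{m,n}$.

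The main obstacle is the uniform-spacings tail bound together with the bookkeeping needed to make ``almost surely'' meaningful as both $m$ and $n$ vary. The former is a well-known exchangeability/union-bound argument (or follows from the beta representation of $G_{m,n}$), and the latter is handled by the i.i.d.\ coupling above. Everything else is a deterministic inequality, which makes this argument cleaner than the $\mathcal{S}_{m,n}=2$ sketch: that sketch only gives convergence in probability under $m=o(\sqrt n)$ and would require a summable bound on $\mathbb{P}(\mathcal{S}_{m,n}\neq 2)$ to be upgraded to an almost sure statement.
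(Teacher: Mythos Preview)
Your argument is correct and takes a genuinely different route from the paper's sketch. The paper argues in two steps: invoke Theorem~\ref{thm:dist} to get $\mathcal{S}_{m,n}=2$ with high probability, so the longest balanced interval contains only two consecutive sample points; then use that the maximum spacing between consecutive points converges to $0$ almost surely. You bypass the first step entirely with the deterministic bound $\mathcal{L}_{m,n}\le(2m+1)G_{m,n}$, obtained from the observation that any balanced interval contains at most $2m$ sample points, and then control $G_{m,n}$ via the standard uniform-spacings tail bound and Borel--Cantelli on the i.i.d.\ coupling. Your route has two concrete advantages. First, it actually delivers almost sure convergence: as you correctly observe, the bound $\mathbb{P}(\mathcal{S}_{m,n}\neq 2)=O(m^2/n)$ coming out of Theorem~\ref{thm:dist} is not summable for general $m=o(\sqrt n)$ (take $m\sim n^{1/4}$), so the paper's sketch as stated only yields convergence in probability and would need additional work to be upgraded. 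Second, you obtain an explicit rate $\mathcal{L}_{m,n}=O(m\log n/n)$ almost surely, which the paper's argument does not provide. The paper's approach has the modest virtue of reusing the combinatorial content of Theorem~\ref{thm:dist}, but at the cost of the gap you identified.
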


\clearpage
\bibliographystyle{plain}
\bibliography{main}

\end{document}